\newtheorem{proposition}{Proposition}
\newtheorem{example}{Example}
\newtheorem{remark}{Remark}
\newtheorem{observation}{Observation}
\title{Transaction Ordering Auctions}
\author{Christoph Schlegel}
\affil{Flashbots}
\date{}
\begin{document}
\maketitle
\begin{abstract}
 We study equilibrium investment into bidding and latency reduction for different sequencing policies. For a batch auction design, we observe that bidders shade bids according to the likelihood that competing bidders land in the current batch. Moreover, in equilibrium, in the ex-ante investment stage before the auction, bidders invest into latency until they make zero profit in expectation.  
 We compare the batch auction design to continuous time bidding policies (time boost) and observe that (depending on the choice of parameters) they obtain similar revenue and welfare guarantees.
\end{abstract}
\section{Introduction}
Various transaction ordering policies for roll-up sequencers, exchanges or L1 blockchains have been proposed. The following categories seem to cover the available options well:

\begin{itemize}
\item \emph{First Come First Serve} orders transactions by time stamp of arrival at whatever server orders the transactions. Questions of decentralized implementation aside, this policy seems appealing and intuitive to many. It is the go-to policy in traditional finance and hence users are used to interacting with it. FCFS appeals to basic intuitions about fairness.~\cite{kursawe2020wendy,themis} But there is also an efficiency argument to be made: FCFS provides an incentive to incorporate new external information quickly into the state of the system. 
\item \emph{Bidding Based Ordering}: Orders are processed in batches or blocks.  Transactions within a batch interval are ordered according to a function of the attached bids. The function can (hypothetically) be arbitrarily complex: we could order transactions by bid, we could auction individual slots in the batch, or even allow combinatorial bidding where users express preferences over the entire content of the block.  Also the “bid” can sometimes be interpreted broadly, for example in Ethereum block building, the bid could contain the amount of MEV the user allows the block builder to extract from them.
\item \emph{Random ordering or other "shuffling" protocols}: Randomness~\cite{random}, if implementable, is a means to achieve ex-ante fairness when ranking transactions. In a different direction, verifiable sequencing rules \cite{credible} are designed to make it detectable if a sequencer deviates from the rule. 
\item \emph{Hybrid policies}: The above policies can be mixed and matched. For example, FCFS can be implemented with discrete buckets where orders within a bucket are ordered randomly.  A recent proposal \cite{timeboost}, orders transactions by a scoring rule called “time boost” that scores transactions by a combination of time stamps and bids.
\end{itemize}

While these policies look very different from each other, a substantial aspect of all of them is that they organize a contest for earlier transaction execution among those users that care about it (CEX-DEX arb traders, liquidators, etc.). The term contest here has the usual meaning from economics: users exert effort (investing in latency, spending money on bidding, spamming your server with transactions) to produce a signal (a timestamp, a bid, a set of transactions IDs) and based on these signals, we decide which transactions to include and in which order (and therefore decide who wins the different contests). 

The framing as a contest is helpful, in so far as it gives us an indication of what it means to choose a transaction ordering policy: we organize a contest among users and users maximize whatever signal maximizes their chances of winning the contest. Thus, we need to decide on what dimension we want them to compete: latency investment, expenditure on bidding, the number of transaction requests you receive, increasing entropy, a mixture of all of them, etc? 

In the following, I want to focus on the first two categories of transaction ordering policies, time stamp and bidding based policies, and hybrid policies mixing between the first two categories. This is because these policies have an implicit or explicit focus on efficiency (broadly understood) which is desirable. A maybe under-appreciated aspect of bidding-based policies is that they do not eliminate latency competition completely. Even in a pure batch auction, there still is an advantage for low latency bidders when approaching the end of the batch. This has for example been documented \cite{special} in Ethereum block building, where searchers specialized in CEX-DEX arbitrage need low latency to be competitive in “top of block” MEV.

Following this observation, I analyze the equilibrium bidding and equilibrium latency investment of bidders interacting with a batch auction policy and derive the equilibrium welfare and revenue achieved. In comparison to an analysis that does not take latency into consideration, we observe further bid shading in equilibrium. Bidders lower their bids as a function of the the odds of being included in the current batch. As a second step, I analyze the (ex-ante) latency investment decisions of the bidders interacting with a batch auction. It turns out that in equilibrium bidders will engage in a race to the bottom where all expected profits from participating in the batch auction is compete away through latency investment so that bidders break even in expectation. The analysis is carried out for a first price winner-pay auction. However, the analysis also carries over to other pricing rules such as an all-pay auction which frequently occurs in practice. The analysis of the batch auction is applicable to several different situations relevant in practice,such as the competition for the top of block slot in L1 blockchains or to  L2 sequencers using a batch auction design.

Besides batch auctions, alternative "continuous" bidding policies \cite{timeboost} have been suggested. In these policies, transactions are ordered by time stamp, but bidders can additionally improve their position by bidding. As a second contribution, I also analyze the equilibrium bidding behavior in this kind of market design and compare efficiency and revenue to the batch auction case.
The comparison depends on the choice of parameters, the length of the batch in the batch auction versus the maximal time advantage that can be obtained by bidding. However, choosing the parameters comparably, the two design perform very similarly.

Some disclaimers: The following analysis is purely economic. I abstract away from questions of consensus and implementation and assume that the policies considered can be implemented, because they are run by a trusted centralized sequencer or because we know how to decentralize these policies in a satisfying way. I abstract away from incentive compatibility problems (MEV extraction, censoring etc.) on the side of the party that implements the transaction ordering policy and assume that the policies are implemented as stated. 

\section{Equilibrium Analysis of Bidding and Latency Investment}

The starting point of my analysis is a bidding and/or latency race between two bidders, who each want their transaction to be executed before the other bidder’s transaction (I would expect similar results to hold for more than two bidders). A typical situation that triggers such race could, for example, be an arbitrage opportunity arising through a price discrepancy between an off-chain CEX and an on-chain DEX. Another typical example would be a competition for executing a liquidation. While there are other MEV games played in reality, where bidders have more complicated preferences over transactions orderings than just about how two transactions are ordered relative to each other, these atomic contest for earlier inclusion constitute a large fraction of trading activity on most platforms.  Moreover, many other strategies contain an element of it, as it might be a necessary part of the execution of a more complicated trade.

I assume that the value of earlier execution can be different for the two bidders, for example they could have different amounts of liquidity deployed on different platforms so that they can extract different amounts of value from an arb. However, both of the two bidders should have a non-negative value for having their transaction be executed first. The situation is a race between the two bidders in the sense that the bidder, whose transaction is included later, cannot extract any value. 

I assume that there are two sources of uncertainty for the two bidders:
\begin{enumerate}
\item The bidders are uncertain about the competitor’s value of winning.
\item The bidders are uncertain about each others’ latency. Timestamps are uncertain and random but correlated with (ex-ante) latency investment decisions.
\end{enumerate}
\subsection{Batch auctions}
In a batch auction, all transactions arriving within a pre-specified time window (according to some time stamping scheme) are ordered according to their attached bid, with the highest bid transaction being executed first (if feasible), the second highest bid transaction second (if feasible) etc. A bid cut-off (or reserve price) can be used to bound the total number of transactions being included. A variety of payment rules can be used. In the following, I assume that the competition between the two bidders follow a first price auction format with two possible interpretations: 1) the payment rule is pay-as-you bid, but the capacity is bounded so that the lower bid transaction is not executed 2) the lower bid transaction is reverted if it does not land in the higher slot. Qualitatively very similar results would also hold for an all-pay batch auction, see Remark~2.

 For the analysis I normalize time so that bidding happens during a unit time interval $[0,1]$.  The timing is as follows:

\begin{enumerate}
\item An arbitrage opportunity arrives uniformly at random in the unit interval. When the arbitrage opportunity occurs, bidders learn their valuations. A player $i$ has a valuation $v_i$ to have his transaction included first, where $v_i$ is distributed according to $F_i(x)$.
\item Bidders send a bid for inclusion. Depending on their latency and time of observing the arb, their bid gets included in the current or in the next batch. 
\item At the end of the batch, bids are evaluated according to a first price auction. The higher bid transaction is executed and pays the attached bid.
\end{enumerate}

Assume that there are two bidders with valuations $v_1$ and $v_2$, distributed, for simplicity, i.i.d. uniform on the unit interval. Qualitatively, the analysis carries over to non-uniform valuations. The assumption of independent valuations models the case where there is heterogeneity between the two bidders, while the common component in valuations is known with certainty.

Assume that bidder $i$ has probability of $T_i(\tau)$ of submitting a bid at time $0\leq \tau\leq 1$ which is included in the current batch and probability $1-T_i(\tau)$ of being included in a later batch.

\begin{figure}[t]
\centering
\includegraphics[scale=0.5]{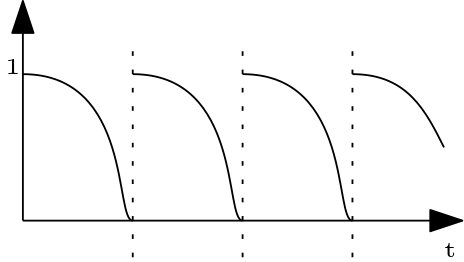}
\caption{Probability  $T_i$ of inclusion in the current batch. Dotted lines indicate batch cut-offs.}
\end{figure}
We can interpret $T_i(\tau)$ either as objective probabilities or as bidders’ (subjective) beliefs. I don't make any particular assumption on $T_i$ other than that $1-T_i(\tau)$ is a differentiable CDF, so that $T_i(0)=1$  and $T_i(1)=0$. It is straightforward to calculate equilibrium bidding strategies, both for 1) the situation where each bidder is certain about whether or not their own bid gets included in the batch but uncertain about the competitor’s bid, and for 2) the situation where bidders are uncertain about their own bid and the competitor’s bid. 
\begin{proposition}
The symmetric equilibrium bidding strategy in the batch auction is  
    
    $$
    b(v)=\frac{v^2}{2(v+\tfrac{1-T(\tau)}{T(\tau)})},
    $$
     which gives a payoff of   $$
    \Pi(\tau)=\tfrac{1}{2}-\tfrac{1}{3}T(\tau).
    $$
\end{proposition}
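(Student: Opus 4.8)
The plan is to handle the bidding stage, for a fixed arrival time $\tau$, as an ordinary first‑price auction with a \emph{modified} winning probability, extract the symmetric strategy from the resulting first‑order ODE, and then integrate the equilibrium surplus. Throughout I would fix $\tau$, abbreviate $T\equiv T(\tau)$, and posit a symmetric equilibrium in which both bidders use a continuous, strictly increasing strategy $b(\cdot)$ with $b(0)=0$; the boundary value $b(0)=0$ is forced by the usual argument that the lowest type never posts a positive bid, and monotonicity will be confirmed by the single‑crossing computation below.

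First I would write down the interim payoff of a bidder with value $v$ whose own transaction lands in the current batch, restricting attention to deviations of the form ``submit $b(z)$'' for $z\in[0,1]$ (a bid above $b(1)$ is weakly dominated by $b(1)$, and negative bids are infeasible). Such a bidder is executed first exactly when either the opponent misses the current batch — probability $1-T$, which is an automatic win since the bidder is then alone in the batch — or the opponent is also in the batch (probability $T$) but has the lower value $w<z$ (probability $z$, since $w\sim U[0,1]$ and $b$ is increasing). Hence the interim payoff is
\[
\pi(z;v)\;=\;\bigl(v-b(z)\bigr)\bigl(1-T+Tz\bigr).
\]
Differentiating in $z$, imposing optimality at $z=v$, and using $\tfrac{d}{dv}(1-T+Tv)=T$, the first‑order condition $-b'(v)(1-T+Tv)+(v-b(v))T=0$ is precisely $\tfrac{d}{dv}\bigl[b(v)(1-T+Tv)\bigr]=Tv$. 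Integrating with $b(0)=0$ gives $b(v)(1-T+Tv)=\tfrac12 Tv^2$, i.e.
\[
b(v)=\frac{Tv^2}{2(1-T+Tv)}=\frac{v^2}{2\bigl(v+\tfrac{1-T}{T}\bigr)},
\]
which is the claimed formula (it is increasing on $[0,1]$ and reduces to $v/2$ at $T=1$, recovering the textbook first‑price auction). To confirm this is an equilibrium rather than merely a critical point I would note that $\partial_z\pi(z;v)-\partial_z\pi(z;z)=T(v-z)$ while $\partial_z\pi(z;z)=0$, so $\partial_z\pi(z;v)>0$ for $z<v$ and $<0$ for $z>v$; thus $z=v$ is the global maximizer on $[0,1]$ and no deviation is profitable.

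For the payoff I would substitute the closed form back in: on path a bidder with value $v$ that is in the current batch earns $\pi(v;v)=v(1-T+Tv)-\tfrac12 Tv^2=(1-T)v+\tfrac12 Tv^2$, and integrating over $v\sim U[0,1]$ yields
\[
\Pi(\tau)=\frac{1-T}{2}+\frac{T}{6}=\frac12-\frac13\,T(\tau),
\]
where $\Pi(\tau)$ is read as the expected profit conditional on one's own transaction being included in the current batch — the relevant ``prize'' for the subsequent latency‑investment stage. Sanity checks: at $T=1$ this is $\tfrac16$, the classical value, and as $T\downarrow0$ it tends to $\tfrac12=\mathbb{E}[v]$, matching the fact that $b(\cdot)\downarrow0$ and the bidder then wins essentially for free.

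The algebra is routine once the objective is set up, so the substantive content is conceptual: (i) identifying the winning probability as $1-T+Tz$ — recognizing that the opponent missing the batch is an automatic win, which is exactly the channel producing the extra bid shading through the term $\tfrac{1-T}{T}$; and (ii) pinning down the right notion of equilibrium payoff, namely $\Pi(\tau)$ conditional on own inclusion. I would also remark that the identical $b(\cdot)$ solves the variant in which a bidder is additionally uncertain about its own inclusion, since there the whole objective is multiplied by the constant $T$, which is irrelevant to the maximization.
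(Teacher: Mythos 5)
Your proof is correct and follows essentially the same route as the paper: you set up the same objective $(v-b(z))(1-T+Tz)$, derive the same first-order ODE, integrate with $b(0)=0$ to get the same bid function, and compute the same payoff integral; your single-crossing verification that $z=v$ is a global maximizer is a welcome addition the paper omits. One caveat: your closing aside that the identical $b(\cdot)$ solves the variant with own-inclusion uncertainty ``since the whole objective is multiplied by the constant $T$'' is wrong --- in that variant the competition term is $T^2+(1-T)^2$ (both bidders land in the \emph{same} batch, current or next) rather than $T\cdot T$, so the objective is not a constant multiple of the one you analyzed, and the shading factor changes to $\tfrac{T(1-T)}{T^2+(1-T)^2}$ as in the paper's Remark~1.
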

\begin{proof}
    
    For the case that for bidder $i$ the arb realizes at $\tau$, he knows that he can make into the batch and his valuation is $v_i$, his optimal bid solves
    
    $$
    \max_{0\leq b_i\leq 1}(v_i-b_i)(T_j(\tau)Pr_{v_j\sim F_j}[b_i\geq b_j(v_j)]+(1-T_j(\tau))),
    $$
    
    i.e. bidder $i$ wins if the other bidder lands in the same batch but $i$ bids higher (which happens with probability $(T_j(\tau)Pr_{v_j\sim F_j}[b_i\geq b_j(v_j)]$) or the other bidder lands in a later batch (which happens with probability $1-T_j(\tau)$). Taking first order conditions and assuming that we are in a symmetric equilibrium so that $Pr[b_i\geq b_j]=Pr[v_i\geq v_j]=F(v_i)$, we get:
    
    $$
    \frac{(v_i-b_i(v_i))f(v_i)T_j(\tau)}{b'_i(v_i)}-(T_j(\tau)F(v_i)+(1-T_j(\tau)))=0,
    $$
    
    where $b_i(v)$ is the optimal bid as a function of the valuation and $b_i'(v)$ its derivative. Solving the differential equation obtained from the first order condition condition with the boundary condition $b_i(0)=0$ (if the bidder has no value for the arb he doesn't bid), we get the equilibrium bidding strategy: 
    
    $$
    b(v)=\frac{v^2}{2(v+\tfrac{1-T(\tau)}{T(\tau)})}.
    $$
    
    To calculate the equilibrium pay-off we integrate over the value distribution:
    
    $$
    \Pi(\tau)=\int_{0}^1(v_i-b_i)(T(\tau)F_j(v_i)+(1-T(\tau)))dv_i\\=\int_{0}^1(v_i-\frac{v^2}{2(v+\tfrac{1-T(\tau)}{T(\tau)})})(T(\tau)v_i+(1-T(\tau)))dv_i=\tfrac{1}{2}-\tfrac{1}{3}T(\tau).
    $$\qed
    \end{proof}
    \begin{remark}
  Similarly, we can derive the optimal bid, if bidders are uncertain not only about the competitor’s but also their own bid making it into the current batch. In that case, bidder $i$’s bid solves 
    
    $$
    \max_{0\leq b_i\leq 1}(v_i-b_i)((T_i(\tau)T_j(\tau)+(1-T_i(\tau))(1-T_j(\tau)))Pr_{v_j\sim F_j}[b_i\geq b_j(v_j)]+T_i(\tau)(1-T_j(\tau))).
    $$
    
    By the same kind of calculation, the equilibrium bidding strategy for the symmetric case is: 
    
    $$
    b(v)=\frac{v^2}{2(v+\tfrac{T(\tau)(1-T(\tau))}{T(\tau)^2+(1-T(\tau))^2})}.
    $$
\end{remark}
\begin{remark}
Similarly, we can derive the optimal bid for an all-pay version of the batch auction. 
In that case, bidder $i$’s bid solves 
    
    $$
    \max_{0\leq b_i\leq 1}v_i((T_i(\tau)T_j(\tau)+(1-T_i(\tau))(1-T_j(\tau)))Pr_{v_j\sim F_j}[b_i\geq b_j(v_j)]+T_i(\tau)(1-T_j(\tau)))-b_i.
    $$
    
    By the same kind of calculation, the equilibrium bidding strategy for the symmetric case is: 

    $$
    b(v)=\frac{v^2}{2}T(\tau).
    $$
\end{remark}
We can make the following observations about equilibrium bidding:
\begin{remark}In equilibrium,
\begin{enumerate}
\item bids are lower than in a standard first price auction,
\item  bidders shade bids according to the odds $\tfrac{1-T(\tau)}{T(\tau)}$ that the competitor gets included in a later batch, 
\item  the payoffs for bidders are higher than in a standard first price auction,
\item  the revenue for the auctioneer is lower. 
\end{enumerate}
\end{remark}

\subsection{Ex-ante Latency Investment for Batch Auctions}

Next let us look at the latency investment game induced by the batch auction design. Assume that prior to bidding, bidders invest in latency reduction. The bidders do not know the arrival time and the value of the arb when making the investment decision. The investment is therefore at the ex-ante stage. When making their investment decision, the bidders only know the expected ex-ante profit from the auction as analyzed above.
Assume that it costs $C(\Delta)$ for a user to invest into latency so that he can send messages with delay $\Delta$. We assume that

\begin{enumerate}
\item it is impossible to send arbitrarily fast messages, $\lim_{\Delta\rightarrow 0}C(\Delta)=\infty$,
\item it is possible without any additional investment into latency to send a bid at the beginning of the batch that is included in the batch, $C(1)=0$,
\item lower latency is more expensive than higher latency, i.e.  $C(\Delta)$ is strictly decreasing,
\item there is increasing marginal cost of a marginal time advantage, i.e. $C$ is convex.
\end{enumerate}

Suppose that bidder $i$ invests into latency with delay $\Delta_i$ so that he is able to bid on all arbs arriving up to time $1-\Delta_i$. Assuming uniform arrivals of arbs, defining the probability that bidder $j$ has delay at most $\Delta_j$ by $\sigma_j(\Delta_j)$, bidder $i$’s profit of investing into latency to obtain a delay of $\Delta_i$ is (see the analysis in the previous section):

$$
\int_0^{1-\Delta_i}(\tfrac{1}{2}-\tfrac{1}{3}T(\tau))d\tau-C(\Delta_i)=\int_{\Delta_i}^1(\tfrac{1}{2}-\tfrac{1}{3}\sigma_j(\Delta))d\Delta-C(\Delta_i).
$$
\begin{figure}[t]
\centering
\includegraphics[scale=0.55]
{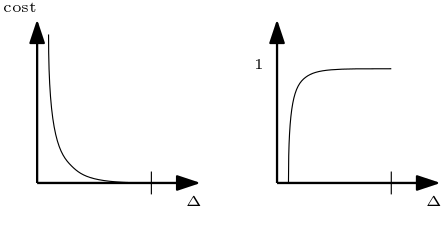}\caption{Cost of latency and the implied equilibrium latency distribution (prob that latency is smaller than $\Delta$.}
\end{figure}
From this we can construct a symmetric equilibrium of the investment game. 
The proofs of all subsequent propositions and observations are in the appendix.
\begin{proposition}
There is a symmetric equilibrium of the ex-ante investment game with mixed strategy
    $$
    \sigma(\Delta)=3/2+3C'(\Delta),
    $$
    in which bidders make $0$ profit on average.
\end{proposition}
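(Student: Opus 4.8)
The plan is to read the proposition as a statement about a symmetric mixed-strategy equilibrium of the two-player game in which each bidder commits to a delay $\Delta\in[0,1]$, with $\sigma$ the CDF of the randomized choice. The first step is to record the relevant best-response payoff. If the opponent plays $\sigma$, then in the symmetric profile the probability that the competitor lands in the current batch when the arb arrives at $\tau$ is $\sigma(1-\tau)$, so by Proposition~1 and the substitution $\Delta=1-\tau$ (exactly the displayed identity preceding the proposition) a bidder who commits to delay $\Delta$ earns
\[
U(\Delta)\;=\;\int_{\Delta}^{1}\!\left(\tfrac12-\tfrac13\,\sigma(s)\right)ds\;-\;C(\Delta).
\]
In any symmetric equilibrium every $\Delta$ in the support of $\sigma$ maximizes $U$, so $U$ is constant on the support and the first-order condition holds on its interior.

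Second, I would differentiate. Since $\sigma$ is a CDF and $C$ is convex (hence $C'$ exists and is monotone), $U'(\Delta)=-\bigl(\tfrac12-\tfrac13\sigma(\Delta)\bigr)-C'(\Delta)$, and setting this to zero on the support yields precisely $\sigma(\Delta)=\tfrac32+3C'(\Delta)$. Convexity of $C$ makes the right-hand side nondecreasing, as a CDF must be; the pole $C(\Delta)\to\infty$ as $\Delta\to0$ forces $C'$ to be very negative near $0$, so the formula is below $0$ there and the lower support endpoint $\underline\Delta$ is pinned down by $C'(\underline\Delta)=-\tfrac12$ (so $\sigma(\underline\Delta)=0$), while an upper endpoint $\overline\Delta$, if it falls in $(0,1)$, is pinned down by $C'(\overline\Delta)=-\tfrac16$ (so $\sigma(\overline\Delta)=1$); any mass the formula does not place is put as an atom at $\Delta=1$, which is costless since $C(1)=0$. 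This produces a genuine randomized strategy whose CDF on its support is the claimed expression.

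Third, I would check that $\sigma$ is a best response to itself, i.e.\ that $U$ is not larger outside the support. Plugging the formula into $U'$ gives $U'\equiv0$ on the support, so $U$ is constant there as required. For $\Delta$ below the support $\sigma(\Delta)=0$, so $U'(\Delta)=-\tfrac12-C'(\Delta)\ge0$ because $C'(\Delta)\le C'(\underline\Delta)=-\tfrac12$ by convexity; hence investing more than the support prescribes does not pay. Symmetrically, above the support $\sigma(\Delta)=1$, so $U'(\Delta)=-\tfrac16-C'(\Delta)\le0$ since $C'(\Delta)\ge-\tfrac16$; hence investing less does not pay either. So $\sigma$ is an equilibrium strategy.

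Finally, since $U$ is constant on the support it suffices to evaluate it at one point; at $\Delta=1$ we get $U(1)=0-C(1)=0$, which is the zero-profit claim. The one genuinely delicate point, and the step I expect to be the main obstacle, is justifying that $\Delta=1$ is itself a best response — equivalently that the support together with its atom reaches all the way to $\Delta=1$. This is exactly what upgrades the easy inequality ``expected profit $\ge 0$'' (not investing at all is always available and yields $U(1)=0$) to the equality ``expected profit $=0$'', and it is here that convexity of $C$ together with $C(1)=0$ and the behavior of $C'$ near the no-investment point do the real work; everything else is bookkeeping with the first-order condition.
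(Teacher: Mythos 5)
Your proposal follows the paper's own route step for step: the same payoff functional $U(\Delta)=\int_\Delta^1(\tfrac12-\tfrac13\sigma(s))\,ds-C(\Delta)$, the same indifference/first-order condition on the support yielding $\sigma(\Delta)=\tfrac32+3C'(\Delta)$, the same endpoints $C'(\underline\Delta)=-\tfrac12$ and $C'(\overline\Delta)=-\tfrac16$, and the same zero-profit evaluation. You add one thing the paper omits and that is worth having: the check that deviations outside the support are unprofitable ($U'\ge0$ below $\underline\Delta$ and $U'\le0$ above $\overline\Delta$, both from convexity of $C$), i.e., that the first-order condition is not merely necessary.

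The step you flag as ``the main obstacle'' is, however, a genuine gap, and it cannot be closed from the stated assumptions alone. Your zero-profit argument needs $\Delta=1$ to be a best response, i.e., the support (or its closing atom) must reach $1$. That happens exactly when $C'(1^-)\le-\tfrac16$: then the formula stays $\le1$ on $[\underline\Delta,1]$, the residual mass sits as an atom at $1$, and $U\equiv U(1)=-C(1)=0$. But if $C'(1^-)>-\tfrac16$ --- which is compatible with $C$ convex, strictly decreasing, $C(1)=0$ and $C(0^+)=\infty$, e.g.\ $C(\Delta)=\epsilon(1-\Delta)+\epsilon(1/\Delta-1)$ with $\epsilon$ small --- then $\sigma$ already reaches $1$ at $\overline\Delta<1$, there is no mass left for an atom at $1$, and the constant equilibrium payoff is $U(\overline\Delta)=\tfrac16(1-\overline\Delta)-C(\overline\Delta)$, which by convexity is $\ge0$ and strictly positive unless $C$ is affine with slope $-\tfrac16$ on $[\overline\Delta,1]$. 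So the zero-profit conclusion needs the extra hypothesis $C'(1^-)\le-\tfrac16$ (or a restatement of the payoff as $\tfrac16(1-\overline\Delta)-C(\overline\Delta)\ge0$). Note that the paper's own proof has exactly the same hole: it substitutes $\tfrac12-\tfrac13\sigma(\Delta)=-C'(\Delta)$ over the entire interval $[\Delta_i,1]$ and writes $\int_{\Delta_i}^1-C'(\Delta)\,d\Delta=C(\Delta_i)$, which silently assumes the formula describes the CDF all the way to $1$. Your write-up at least isolates where the missing hypothesis is needed, but as it stands the proof of the zero-profit claim is incomplete in the same place the paper's is.
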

\begin{proof}
    
    By symmetry, $\sigma(\Delta):=\sigma_1(\Delta)=\sigma_2(\Delta)$ which describes the (mixed) strategy of each bidder. Since we are in an equilibrium, it has to be the case that the profit is constant on the support of $\sigma$, i.e. for each $\Delta\in [\underline{\Delta},\overline{\Delta}]$ we need
    
    $$
    \frac{\partial}{\partial \Delta}\left(\int_{\Delta}^1(\tfrac{1}{2}-\tfrac{1}{3}\sigma(\Delta))d\Delta-C(\Delta)\right)=0\Rightarrow -\tfrac{1}{2}+\tfrac{1}{3}\sigma(\Delta)-C'(\Delta)=0
    $$
    
    We obtain the mixed strategy:
    
    $$
    \sigma(\Delta)=3/2+3C'(\Delta).
    $$
    
    The support of $\sigma$ is obtained by setting the above formula equal to $0$ resp. $1$ which gives $C'(\underline{\Delta})=-1/2$ and $C'(\overline{\Delta})=-1/6.$
    
    Substituting the strategy into the expression for the profit, we see that bidders make zero average profit
    
    $$
    \int_{\Delta_i}^1(\tfrac{1}{2}-\tfrac{1}{3}\sigma(\Delta))d\Delta-C(\Delta_i)=\int_{\Delta_i}^1-C'(\Delta)d\Delta-C(\Delta_i)=C(\Delta_i)-C(\Delta_i)=0.
    $$\qed
 \end{proof}   
 Ex-ante investment into latency leads to zero average profit for bidders in equilibrium: The ex-ante expected gains from bidding in the batch auction are equal to the cost of latency investment of the bidder independently of the cost function.

Finally, we are interested in the expected value of the latency difference $\Delta:=|\Delta_2-\Delta_1|$ induced by the latency investment. This quantity is relevant for the welfare and revenue guarantees of the batch auction. We have
\begin{align*}E_{\sigma}[|\Delta|]=E_{\sigma}[\max\{\Delta_1,\Delta_2\}]-E_{\sigma}[\min\{\Delta_1,\Delta_2\}]=\int_{\underline{\Delta}}^{\overline{\Delta}}2\Delta \sigma'(\Delta)\sigma(\Delta)d\Delta-\int_{\underline{\Delta}}^{\overline{\Delta}}2\Delta \sigma'(\Delta)(1-\sigma(\Delta))d\Delta\\=\int_{\underline{\Delta}}^{\overline{\Delta}}2\Delta \sigma'(\Delta)(2\sigma(\Delta)-1)d\Delta=\int_{\underline{\Delta}}^{\overline{\Delta}}6\Delta C''(\Delta)(2+6C'(\Delta))d\Delta\end{align*}

\begin{example}
Consider a cost function of the form $C(\Delta)=c/\Delta$. Then $C'(\Delta)=-c/\Delta^2$ and $C''(\Delta)=2c/\Delta^3$. We get the strategy:
$$\sigma(\Delta)=3/2-3c/\Delta^2$$
for $\sqrt{2c}<\Delta<\sqrt{6c}$.
Then the average latency difference is given by the expectation of the absolute value of $\Delta:=\Delta_2-\Delta_1,$
\begin{align*}E[|\Delta|]\approx 0.3203*\sqrt{c}\end{align*}
Accordingly, the average welfare gap to optimal bidding is  $0.16016\sqrt{c}$ and the average revenue is $1/3-0.10677*\sqrt{c}.$

\end{example}

\subsection{Hybrid Policies: Time Boost}

\cite{timeboost} describe a hybrid transaction ordering policy that orders transaction by time stamp of arrival at the (centralized) sequencer, but allows bidders to get a time boost by paying an additional fee. The time boost for paying a fee of $F$ is

$$
\pi=\frac{gF}{F+c},
$$

with parameters $g$ and $c$, and transactions are ordered by the score 

$$
\pi-t
$$

where $t$ is the timestamp of arrival of a bid at the sequencer and $\pi$ is the time boost. The parameters have the following interpretation: Parameter $g$ gives the maximal time boost, a user can get from bidding. In particular, transactions finalize after waiting $g$. Parameter $c$ is the marginal cost per unit of time (normalized by $g$). The auction is all-pay: Bidders need to pay the time boost fee no matter how transactions are ordered.

%[Time boost formula (source: [Arbitrum research](https://research.arbitrum.io/t/time-boost-a-new-transaction-ordering-policy-proposal/8173))](https://images.app.goo.gl/aXcjUYuy3G1uCcX46)

%Time boost formula (source: [Arbitrum research](https://research.arbitrum.io/t/time-boost-a-new-transaction-ordering-policy-proposal/8173))

Let us again consider the scenario where arb opportunities appear uniformly at random on a time interval and assume that the arb is only good for the earlier transaction. If a bidder has a lower score he will get a payout of zero, but still needs to pay.

The analysis of time boost is more complex than the previous analysis for the batch auction, since now the precise time stamps of the two bidders matter and not only whether the time stamp is before the batch cut-off or not. Thus, making the same analysis as previously, where the values and time stamps of the bidder follows some distribution is generally intractable. However, we can analyze the case where the value of the arb to the other bidder is uncertain, but the time stamps are certain.\footnote{The other extreme case, where the value of the arb is commonly known by the bidders, but the time stamps are uncertain, is also tractable to analyse, see \cite{mamageishvili2023shared} for an analysis.} This simplification is not innocent, but it is a reasonable approximation of reality when bidders interact repeatedly and can estimate the expected time stamp of the competitor with very good accuracy.  For simplicity, I consider a linear approximation for the boost formula:

$$
F=\frac{c\pi}{g-\pi}\approx \frac{c\pi}{g}.
$$

For a reasonably large boost parameter, e.g. $g=10$ when $c=1$ the marginal cost $\frac{cg}{(g-\pi)^2}$ (which is relevant for deriving optimal bidding policies) is approximated very well by $c/g$ and we can expect to get qualitatively very similar results for the true boost formula, as long as $g$ is sufficiently large. In the following, I assume $g\geq c$. The equilibrium analysis now follows an all-pay auction with a head start for the lower latency bidder.

\begin{proposition}
The equilibrium signaling strategies in time boost are
    
    $$
    \pi_1=\begin{cases}\tfrac{gv_1^2}{2c}-\tfrac{\Delta}{2},\quad &\text{ if }v_1\geq\sqrt{\tfrac{c}{g}\Delta},\\0&\text{ if }v_1<\sqrt{\tfrac{c}{g}\Delta},\end{cases}
    $$
    
    and
    
    $$
    \pi_2=\begin{cases}\tfrac{gv_2^2}{2c}+\tfrac{\Delta}{2},\quad &\text{ if }v_2\geq\sqrt{\tfrac{c}{g}\Delta},\\0&\text{ if }v_2<\sqrt{\tfrac{c}{g}\Delta},\end{cases}
    $$
\end{proposition}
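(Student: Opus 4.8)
The statement is an instance of equilibrium in an asymmetric all-pay auction with a head start, so the plan is to obtain the strategies from the familiar first-order-condition construction and then argue that the resulting profile admits no profitable deviations. First I would normalize: without loss of generality bidder~$1$ is the lower-latency bidder, and after translating time I set his timestamp to $0$ and bidder~$2$'s to the latency gap $\Delta:=t_2-t_1\geq 0$. Under the linear approximation $F\approx c\pi/g$ a boost of $\pi_i$ costs $\tfrac{c}{g}\pi_i$, paid regardless of the outcome, and bidder~$1$ is executed first exactly when $\pi_1-0\geq \pi_2-\Delta$, i.e.\ $\pi_1+\Delta\geq\pi_2$. Hence bidder~$i$'s payoff is $v_i\,\mathbf{1}[\text{$i$ wins}]-\tfrac{c}{g}\pi_i$ with $v_1,v_2$ i.i.d.\ uniform on $[0,1]$, and bidder~$1$ enjoys a head start of $\Delta$.

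Next I would posit the shape of the equilibrium: a cutoff $\underline v\in(0,1)$ with every type below $\underline v$ submitting $0$, and strictly increasing differentiable strategies $\pi_1(\cdot),\pi_2(\cdot)$ on $[\underline v,1]$. Two structural observations fix the relation between the two strategies. (i) As both are strictly increasing, bidder~$1$ of value $v_1$ beats bidder~$2$ of value $v_2$ iff $\pi_1(v_1)+\Delta>\pi_2(v_2)$; taking $v_1=v_2=v$ forces equal types to tie, so $\pi_2(v)=\pi_1(v)+\Delta$ on $[\underline v,1]$ and the allocation is efficient. (ii) Bidder~$2$ never plays a boost in $(0,\Delta)$: such a bid loses with probability one (bidder~$1$'s score is always $\geq 0$) yet costs strictly more than $0$; so bidder~$2$'s strategy jumps from $0$ to $\Delta$ at $\underline v$, whereas bidder~$1$'s is continuous there with $\pi_1(\underline v)=0$, since an arbitrarily small positive boost already beats every zero-bidding opponent.

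Then I would write the first-order conditions. Facing $\pi_2(\cdot)$, a boost $\pi\geq 0$ wins for bidder~$1$ with probability $\pi_2^{-1}(\pi+\Delta)$ (reading $\pi_2^{-1}(\Delta):=\underline v$, which counts the mass of zero-bidders), so type $v$ maximizes $v\,\pi_2^{-1}(\pi+\Delta)-\tfrac{c}{g}\pi$; the condition $v\,(\pi_2^{-1})'(\pi+\Delta)=\tfrac{c}{g}$ evaluated at the equilibrium boost, where $\pi_2^{-1}(\pi_1(v)+\Delta)=v$, gives $\pi_2'(v)=\tfrac{g}{c}v$. The mirror-image computation for bidder~$2$, who of type $v$ maximizes $v\,\pi_1^{-1}(\pi-\Delta)-\tfrac{c}{g}\pi$ over $\pi\geq\Delta$, gives $\pi_1'(v)=\tfrac{g}{c}v$. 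Integrating, $\pi_i(v)=\tfrac{g v^2}{2c}+K_i$, and $K_1,K_2,\underline v$ are pinned down by: $\pi_2=\pi_1+\Delta$, i.e.\ $K_2=K_1+\Delta$; continuity of bidder~$1$ at the cutoff, $\pi_1(\underline v)=0$; and indifference of bidder~$2$'s cutoff type, who gets $0$ from $\pi=0$ and $\underline v\cdot\pi_1^{-1}(0)-\tfrac{c}{g}\Delta=\underline v^2-\tfrac{c}{g}\Delta$ from bidding just above $\Delta$, hence $\underline v=\sqrt{\tfrac{c}{g}\Delta}$. These three give $K_1=-\Delta/2$, $K_2=\Delta/2$, which is precisely the claimed pair (and $\pi_1(v)\geq 0\iff v\geq\sqrt{c\Delta/g}$ recovers the case split).

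Finally I would carry out the verification, which I expect to be the main obstacle, since the first-order conditions only single out a candidate. One must check that $v\,\pi_2^{-1}(\pi+\Delta)-\tfrac{c}{g}\pi$ and its bidder-$2$ analogue are single-peaked so the stationary point is a global maximum; that bidding $0$ is not a profitable deviation for any active type and no positive boost is profitable for a type below $\underline v$; and that the endpoints are consistent --- the win probability saturates at $1$ once $\pi+\Delta\geq\pi_2(1)=\tfrac{g}{2c}+\tfrac{\Delta}{2}$, i.e.\ exactly at $\pi=\pi_1(1)$, so type~$1$ of bidder~$1$ sits at a one-sided optimum; and one needs the cutoff $\underline v=\sqrt{c\Delta/g}$ to lie in $(0,1)$, which is guaranteed by the maintained assumptions $g\geq c$ and $\Delta$ bounded (with the usual all-pay tie-breaking conventions settling the measure-zero boundary types). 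It is here that $g\geq c$ is actually used; the ODE manipulation itself is routine.
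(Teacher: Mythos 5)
Your construction matches the paper's proof essentially step for step: the same linearized all-pay model with a head start $\Delta$ for the faster bidder, the same shifted-quadratic ansatz $\pi_i(v)=\tfrac{gv^2}{2c}+K_i$ pinned down by the first-order conditions (equivalently $K_1+K_2=\Delta$ in the paper's sign convention), continuity of the faster bidder's strategy at the cutoff, and indifference of the slower bidder's cutoff type, yielding $K_1=-\Delta/2$, $K_2=\Delta/2$ and $\underline v=\sqrt{c\Delta/g}$. If anything, your version is slightly more careful than the paper's --- your cutoff-indifference computation $\underline v^2-\tfrac{c}{g}\Delta=0$ is stated correctly where the paper's displayed condition has a small slip, and your closing checklist of second-order/deviation checks goes beyond what the paper records, which stops at the first-order candidate.
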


\begin{proof}
    
    First I solve the model when the bidders have the same latency and produce the same timestamp. Then this is just a standard all pay auction. Each bidder solves
    
    $$
     \max_{\pi}v_iPr_{v_j\sim F_j}[\pi\geq\pi_j(v_j)]-\tfrac{c}{g}\pi,
    $$
    
    where $\pi_2(v_2)$ is bidder $j$’s time boost bid as a function of his valuation.
    
    Assuming differentiability of the equilibrium bidding functions, equilibrium bidding is characterized by the first order condition:
    
    $$
    \frac{v_if_j(v_i)}{\pi'_i(v_i)}-\tfrac{c}{g}=0
    $$
    
    Solving the differential equation obtained from the first order condition condition with the boundary condition $\pi_i(0)=0$ (if the bidders are symmetric, both start bidding at the zero valuation), we get (for uniformly distributed valuations):
    
    $$
    \pi(v)=\frac{gv^2}{2c}
    $$
    
    In the case that that there is a latency difference $\Delta$, bidder $1$ solves
    
    $$
    \max_{\pi}v_1Pr_{v_2\sim F_2}[\pi+\Delta\geq\pi_2(v_2)]-\tfrac{c}{g}\pi=v_1F_2[v_2(\pi+\Delta)]-\tfrac{c}{g}\pi,
    $$
    
    and  $v_j(\pi)$ is the inverse of $\pi_j$ (assuming that we are in a separating equilibrium such that such inverse exists). Similarly bidder $2$ solves
    
    $$
    \max_{\pi}v_2Pr_{v_1\sim F_1}[\pi-\Delta\geq\pi_1(v_1)]-\tfrac{c}{g}\pi=v_2F_1[v_1(\pi-\Delta)]-\tfrac{c}{g}\pi.
    $$
    
    I use the following approach to construct an equilibrium for the asymmetric case. The equilibrium strategies are obtained by subtracting a constant term from the optimal strategy in the symmetric case for the first bidder
    
    $$
    \pi_1(v)=\frac{gv_1^2}{2c}-K_1\Leftrightarrow v_1(\pi)=\sqrt{\frac{2c(\pi+K_1)}{g}}
    $$
    
    and adding a constant term for the second bidder 
    
    $$
    \pi_2(v)=\frac{gv_2^2}{2c}+K_2\Leftrightarrow v_2(\pi)=\sqrt{\frac{2c(\pi-K_2)}{g}}
    $$
    
    The first order condition for the first bidder is
    
    $$
    v_1v_2'(\pi+\Delta)-c/g=v_1\sqrt{\frac{c}{2g(\pi+\Delta-K_2)}}-c/g=0\Rightarrow \Delta=K_1+K_2
    $$
    
    and for the second bidder is
    
    $$
    v_2v_1'(\pi-\Delta)-c/g=v_1\sqrt{\frac{c}{2g(\pi-\Delta+K_1)}}-c/g=0\Rightarrow \Delta=K_1+K_2.
    $$
    
    There should be a smallest threshold $u$ such that bidders only bid if their valuation is above the threshold: $v_i\geq u$. To guarantee continuity of the expected payout at $v_1=u$, the faster bidder should bid $\pi_1=0$ at $v_1=u$ to obtain an expected payout of $u^2$. In that case:
    
    $$
    K_1=gu^2/(2c) \text{ and }K_2=\Delta-gu^2/(2c)
    $$
    
    To guarantee continuity of the expected payout at $v_2=u$, the weak bidder should be indifferent between bidding and not bidding at the threshold:
    
    $$
    u^2/2-c\pi_2(u)/g=0\Rightarrow K_2=gu^2/(2c)=K_1.
    $$
    
    Therefore: $K_1=K_2=\Delta/2$ and $u=\sqrt{c\Delta/g}.$ We get the equilibrium bidding strategies.\qed
 \end{proof}
   \begin{remark}
   The Arbitrum proposal considers an all-pay auction format where bids are not reverted if the other bidder
obtains a higher score. While the choice of the all pay auction format is caused by technical requirements
(the sequencer is agnostic about the content of the transactions and cannot in- or exclude them based on
un-observable characteristics), it is still interesting to check what would happen in the Arbitrum proposal if
the payment rule is a standard first price auction. By a completely analogous calculation, we obtain:
  $$
    \pi_1=\begin{cases}\tfrac{gv_1}{2c}-\tfrac{\Delta}{2},\quad &\text{ if }v_1\geq{\tfrac{c}{g}\Delta},\\0&\text{ if }v_1<{\tfrac{c}{g}\Delta},\end{cases}
    $$
    
    and
    
    $$
    \pi_2=\begin{cases}\tfrac{gv_2}{2c}+\tfrac{\Delta}{2},\quad &\text{ if }v_2\geq{\tfrac{c}{g}\Delta},\\0&\text{ if }v_2<{\tfrac{c}{g}\Delta},\end{cases}
    $$
   \end{remark}

%\begin{remark} In equilibrium:
%\begin{enumerate}
%\item For low valuations neither bidder will make a time boost bid. 
%\item For high valuations, the bidders produce the same score for the same valuation. 
%\item The value threshold at which bidders start bidding is increasing in $c$, decreasing in $g$ and increasing in the latency difference.
%\item A lower latency bidder underbids relative to a standard all-pay auction, a high latency bidder overbids relative to a standard all-pay auction.
%\item  To maximize bidding revenue, the parameter $c$ should be chosen small and the maximal time boost $g$ large.
%\end{enumerate}

%\end{remark}

An immediate implication of the previous calculations is that the parameter $c$ in the time boost formula should be selected small to increase participation and revenue. In a similar way, the maximal time boost $g$ should be selected large. However, the latter comes with trade-offs, as finalization of bids is slower. I comment on these parameter choices and compare the performance in the next section.

\subsection{Comparing batch auctions and time boost}

It is instructive to compare the relative performance of the two auction formats. To make the batch auction comparable to the time boost proposal, I assume that if neither of the two bidders make it into the batch because they have too high latency, then their orders are both processed in the next batch. We can then look into the equilibrium in either model as a function of the realized difference in latency $\Delta:=|\Delta_2-\Delta_1|$ between the two bidders where $\Delta_i$ is the delay of bidder $i$ when sending a bid to the sequencer. 

First, let us look at allocative efficiency: how likely is it that the higher valuation bidder wins the race?  In the batch auction this can only happen if the two bidders end in different batches and the faster bidder has the lower valuation which happens with probability $\Delta/2$. Under time boost, this can only happen,  if the slower bidder refrains from bidding (which in our equilibrium happens if his valuation is below the threshold $u:=\sqrt{\tfrac{c}{g}\Delta}$) while having a higher valuation. This leads to the following observation:

\begin{observation}
Under the batch auction design, the likelihood that the high valuation bidder loses is half the latency difference $\tfrac{\Delta}{2}$.

Under the time boost design, the likelihood is half the cost of compensating for the latency difference $\tfrac{c}{g}\tfrac{\Delta}{2}$. 

The likelihood of the higher valuation bidder winning is higher under time boost if and only if the marginal cost of bidding is small $\tfrac{c}{g}\leq1$.
\end{observation}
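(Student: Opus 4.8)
\emph{Proof proposal.} The plan is to condition throughout on the realized latencies, label the (weakly) faster bidder as bidder $1$ so that $\Delta=\Delta_2-\Delta_1\ge 0$, and exploit that $v_1,v_2$ are i.i.d.\ uniform and independent of the latency realization; hence, conditional on any event not involving the valuations, $\Pr[v_1<v_2]=\tfrac12$. Throughout, ``winning the race'' means having one's transaction ordered first (and thus being in position to extract the arb). The organizing observation is that in both designs the higher-valuation bidder can lose \emph{only} through an asymmetry between the two bidders, because whenever they compete on equal terms the relevant symmetric equilibrium bid function is strictly increasing (for the batch auction, the first-price bid function; for time boost with no latency gap, $\pi(v)=gv^2/2c$ from Proposition~3), so the higher type submits the higher bid and wins.

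For the batch auction, I would split the arrival time $\tau\sim U[0,1]$ of the arb according to the two cut-offs $1-\Delta_2\le 1-\Delta_1$. If $\tau\le 1-\Delta_2$ both bids land in the current batch, and if $\tau>1-\Delta_1$ neither does so both land in the next batch; in either case the two bidders face a symmetric first-price auction and, by the principle above, the higher type wins. Only on the middle interval $(1-\Delta_2,\,1-\Delta_1]$, of length $\Delta$, do the transactions land in different batches: bidder $1$ is alone in the earlier batch and his transaction executes first regardless of bids, so the higher-valuation bidder loses exactly when $v_1<v_2$. Integrating over $\tau$ and the valuations gives probability $\Delta\cdot\tfrac12=\tfrac{\Delta}{2}$.

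For time boost I would relabel so that the equilibrium of Proposition~3 applies with bidder $1$ enjoying head start $\Delta$, i.e.\ bidder $1$ wins iff $\pi_1+\Delta\ge\pi_2$. Put $u=\sqrt{\tfrac{c}{g}\Delta}$, so that $\tfrac{gu^2}{2c}=\tfrac{\Delta}{2}$; the standing assumption $g\ge c$ gives $u\le 1$, so the equilibrium is interior. Now distinguish cases by whether each $v_i$ exceeds $u$. If $v_1,v_2\ge u$, substituting the strategies yields $\pi_1+\Delta=\tfrac{gv_1^2}{2c}+\tfrac{\Delta}{2}$ and $\pi_2=\tfrac{gv_2^2}{2c}+\tfrac{\Delta}{2}$, so bidder $1$ wins iff $v_1\ge v_2$: efficient. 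If exactly one valuation exceeds $u$, that bidder is also the higher type, and one checks he still wins the race: for $v_1\ge u>v_2$ because $\pi_1\ge 0>-\Delta$ while $\pi_2=0$; for $v_2\ge u>v_1$ because $\pi_1=0$ and $\pi_2=\tfrac{gv_2^2}{2c}+\tfrac{\Delta}{2}\ge\Delta$ precisely when $v_2\ge u$. Finally, if $v_1,v_2<u$ both bid $\pi_i=0$, so bidder $1$ wins by the head start and the higher type loses iff $v_1<v_2$; since $\Pr[v_1,v_2<u]=u^2=\tfrac{c}{g}\Delta$, this contributes $\tfrac{c}{g}\Delta\cdot\tfrac12$, and it is the only contribution, giving total probability $\tfrac{c}{g}\tfrac{\Delta}{2}$.

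Comparing the two expressions, $\tfrac{c}{g}\tfrac{\Delta}{2}\le\tfrac{\Delta}{2}$ if and only if $\tfrac{c}{g}\le 1$ (with equality iff $\tfrac{c}{g}=1$), which is the claimed statement. The main obstacle is the ``exactly one valuation above $u$'' case in time boost: there one must use both the precise threshold $u=\sqrt{c\Delta/g}$ and the $\pm\tfrac{\Delta}{2}$ shifts in the Proposition~3 strategies to verify that the head start never overturns the efficient allocation. Everything else is the case split together with the elementary fact that symmetric competition allocates to the higher type.
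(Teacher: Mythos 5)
Your argument is correct and follows essentially the same route as the paper, which justifies this observation only informally in the preceding paragraph: the higher type can lose in the batch auction only when the arb arrives in the length-$\Delta$ window where the bidders land in different batches (contributing $\Delta\cdot\tfrac12$), and under time boost only when the slower bidder's valuation falls below the threshold $u=\sqrt{c\Delta/g}$ yet exceeds the faster bidder's (contributing $u^2/2=\tfrac{c}{g}\tfrac{\Delta}{2}$). Your explicit verification of the ``exactly one valuation above $u$'' cases is a useful elaboration of a step the paper leaves implicit, but it is the same decomposition.
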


Next let us look at the welfare loss relative to the first best where the item is always allocated to the higher valuation bidder: under which design is this welfare gap larger? In the first best, the surplus is the expectation of the maximum of the two valuations which is $2/3$ for the i.i.d. uniform case. 

\begin{observation}
Under the batch auction design, the welfare gap to the optimum is $\Delta/6$.

Under the time boost design, the welfare gap is $\tfrac{1}{6}(\tfrac{c}{g}\Delta)^{3/2}$.

The welfare gap is smaller with time boost, if and only if the marginal cost of bidding is small  $\tfrac{c}{g}\leq1/\sqrt[3]{\Delta}$.
\end{observation}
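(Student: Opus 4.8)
The plan is to compute each welfare gap as an explicit function of the realised latency difference $\Delta$ and then compare the two expressions. Throughout I assume without loss of generality that bidder $1$ is the faster bidder (his delay is smaller by $\Delta$); since valuations are i.i.d., the labels ``fast'' and ``slow'' are exchangeable from the point of view of the valuation realisations, and $E[\max\{v_1,v_2\}]=2/3$ is the first-best surplus.

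For the batch auction I would decompose the arrival time $\tau\in[0,1]$ of the arb into three regions according to which batch each bid lands in. With deterministic delays, bidder $i$'s bid is in the current batch iff $\tau\le 1-\Delta_i$, so there is a window of length $\Delta_2-\Delta_1=\Delta$ in which the fast bidder makes the current batch while the slow bidder is pushed to the next batch; on the complement both bidders are in the same batch (the current one, or --- by the convention stated in this subsection that two bidders missing the current batch are processed together in the next one --- the next one). In the common-batch regions the contest is the symmetric first-price auction of Proposition~1, whose equilibrium bid is strictly increasing in $v$ and hence allocates to the higher valuation, so no welfare is lost there. In the length-$\Delta$ window the fast bidder is alone and wins irrespective of valuations, losing $(v_{\text{slow}}-v_{\text{fast}})^{+}$; this window is hit with probability $\Delta$, and $E[(v_{\text{slow}}-v_{\text{fast}})^{+}]=\int_0^1\!\int_0^1(y-x)^{+}\,dx\,dy=\tfrac16$, giving a welfare gap of $\Delta/6$.

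For time boost I would insert the equilibrium of Proposition~3 and determine the winner as a function of $(v_1,v_2)$. When both valuations exceed the threshold $u:=\sqrt{(c/g)\Delta}$ both submit a positive boost and the fast bidder wins iff $\pi_1+\Delta\ge\pi_2$; substituting $\pi_1=\tfrac{gv_1^2}{2c}-\tfrac{\Delta}{2}$ and $\pi_2=\tfrac{gv_2^2}{2c}+\tfrac{\Delta}{2}$, the $\Delta$-terms cancel and this is exactly $v_1\ge v_2$, i.e.\ efficient. If exactly one valuation is below $u$, that bidder submits a zero boost, and comparing scores (using $\tfrac{g u^2}{2c}=\tfrac{\Delta}{2}$) shows the other bidder --- whose valuation then exceeds $u$, hence exceeds the competitor's --- wins, again efficiently. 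The sole inefficiency is the event $\{v_1<u,\ v_2<u\}$: both bid zero, the fast bidder wins on timestamp alone, and the loss is $(v_{\text{slow}}-v_{\text{fast}})^{+}$. Hence the time-boost welfare gap is $\int_0^u\!\int_0^u(y-x)^{+}\,dx\,dy=\int_0^u\tfrac{y^2}{2}\,dy=\tfrac{u^3}{6}=\tfrac16\big(\tfrac{c}{g}\Delta\big)^{3/2}$.

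It remains to compare: $\tfrac16(\tfrac{c}{g}\Delta)^{3/2}\le\tfrac{\Delta}{6}\iff(\tfrac{c}{g}\Delta)^{3/2}\le\Delta\iff(\tfrac{c}{g})^{3}\le\Delta^{-1}\iff\tfrac{c}{g}\le\Delta^{-1/3}=1/\sqrt[3]{\Delta}$, which is the claimed criterion. The main obstacle I anticipate is the time-boost bookkeeping: one must verify that the timestamp head start $\Delta$ and the $\pm\tfrac{\Delta}{2}$ shifts in the equilibrium bids offset \emph{exactly}, so that the contest is efficient on the entire region where both bidders participate, and that the mixed boundary cases (one valuation above, one below $u$) introduce no further misallocation --- this is precisely where the exact form of the Proposition~3 strategies and the value $u^2=\tfrac{c}{g}\Delta$ of the threshold are needed. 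The batch-auction side is then routine once the three-region decomposition is in place.
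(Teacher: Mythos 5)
Your proposal is correct and follows essentially the same route as the paper: partition the outcomes by which bidder wins (the three-region split in arrival time for the batch auction, and the threshold cases relative to $u=\sqrt{(c/g)\Delta}$ for time boost), verify that the $\pm\Delta/2$ shifts cancel the timestamp head start so the contest is efficient whenever both participate, and reduce the comparison to $(\tfrac{c}{g}\Delta)^{3/2}\le\Delta$. The only cosmetic difference is that you compute the welfare \emph{loss} directly via $E[(v_{\text{slow}}-v_{\text{fast}})^{+}]$ on the inefficient event, whereas the paper computes the realised surplus case by case and subtracts it from the first-best $2/3$; the two calculations are equivalent.
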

\begin{proof}
   Assume w.l.o.g. $\Delta_1\leq\Delta_2$ in the following calculations. For the batch auction design we have an expected surplus of:
    
    $$
    \Delta E[v_1]+(1-\Delta)E[\max\{v_1,v_2\}]=\Delta/2+(1-\Delta)\tfrac{2}{3}=2/3-\Delta/6.
    $$
    
    For the time boost auction we have an expected surplus of:
    
\begin{align*}
    Pr[v_2\leq u] E[v_1]+Pr[v_1\leq u\leq v_2]E[v_2|v_2\geq u]+Pr[v_1\geq u,v_2\geq u]E[\max\{v_1,v_2\}|v_1\geq u,v_2\geq u]\\=u/2+u(1-u)\tfrac{1+u}{2}+(1-u)^2\tfrac{u+2}{3}=2/3-\tfrac{u^3}{6}=2/3-\tfrac{1}{6}(\tfrac{c}{g}\Delta)^{3/2}
    \end{align*}
\end{proof}

Next let's look at bidding revenue for the auctioneer. The ex-ante revenue from bidding is the sum of payments received.

\begin{observation}
     Under the batch auction design, the revenue is $(1-\Delta)/3$.

Under the time boost design, the revenue is $(1-(\tfrac{c}{g}\Delta)^{3/2})/3$.

The revenue is higher under time boost if and only if the marginal cost of bidding is small $\tfrac{c}{g}\leq1/\sqrt[3]{\Delta}$ .
\end{observation}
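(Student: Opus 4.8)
The plan is to evaluate the two revenue expressions directly from the equilibria already derived, and then compare them.

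\textbf{Batch auction.} I would reuse the scenario split from the welfare computation: fix the realized latency gap $\Delta$, with bidder $1$ the faster bidder. With probability $\Delta$ the two transactions fall in consecutive batches, so bidder $1$ is alone in the earlier batch; facing no competitor there his problem is $\max_{b}(v_1-b)$, he bids $b=0$, and that batch produces no revenue. With probability $1-\Delta$ the two transactions share a batch, and the game is exactly the $T\equiv1$ specialization of Proposition~1, i.e. a standard first-price auction with equilibrium bid $b(v)=v^2/(2v)=v/2$; the higher-value bidder wins and pays $\tfrac12\max\{v_1,v_2\}$, whose expectation is $\tfrac12\cdot\tfrac23=\tfrac13$ for i.i.d.\ uniform valuations. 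Hence the ex-ante bidding revenue is $(1-\Delta)\cdot\tfrac13=(1-\Delta)/3$.

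\textbf{Time boost.} Since time boost is an all-pay format, the revenue is the sum of the two bidders' expected fees, $\tfrac{c}{g}E[\pi_1]+\tfrac{c}{g}E[\pi_2]$ under the linear approximation $F\approx c\pi/g$. Writing $u=\sqrt{\tfrac{c}{g}\Delta}$, so that $u^2=\tfrac{c}{g}\Delta$, the equilibrium strategies give $\tfrac{c}{g}\pi_1=\tfrac12v_1^2-\tfrac12u^2$ for $v_1\ge u$ (and $0$ below) and $\tfrac{c}{g}\pi_2=\tfrac12v_2^2+\tfrac12u^2$ for $v_2\ge u$ (and $0$ below). With valuations uniform on $[0,1]$ each expected fee is an integral over $[u,1]$; adding them, the $\mp\tfrac12u^2$ head-start terms cancel and I am left with $\int_u^1 v^2\,dv=(1-u^3)/3=(1-(\tfrac{c}{g}\Delta)^{3/2})/3$.

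\textbf{Comparison and main obstacle.} Subtracting the two revenues, time boost dominates iff $(\tfrac{c}{g}\Delta)^{3/2}<\Delta$, equivalently $(\tfrac{c}{g})^{3}<1/\Delta$, equivalently $\tfrac{c}{g}\le 1/\sqrt[3]{\Delta}$, which is the stated threshold; it coincides with the threshold in the welfare observation, unsurprisingly, since the revenue and welfare gaps here are controlled by the same mis-allocation event. None of the steps is technically hard; the only point needing a careful word is the batch-auction side, namely justifying the clean decomposition $\mathrm{rev}=(1-\Delta)\cdot\tfrac13$: conditional on sharing a batch the bidders play the $T\to1$ limit of Proposition~1 (bids $v/2$, revenue $\tfrac13$, and this holds whether the shared batch is the current or the next one, both sub-cases together carrying probability $1-\Delta$), while conditional on being split the faster bidder's best response is $b=0$, contributing nothing. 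Everything else is bookkeeping.
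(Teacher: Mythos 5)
Your proposal is correct and follows essentially the same route as the paper: the batch-auction revenue comes from conditioning on whether the bidders share a batch (probability $1-\Delta$, standard first-price revenue $1/3$) versus are split (faster bidder bids $0$), and the time-boost revenue is the sum of the two expected all-pay fees over $[u,1]$. The only cosmetic difference is that you add the two fee integrands first so the $\pm\tfrac{u^2}{2}$ head-start terms cancel, whereas the paper evaluates each bidder's expected payment separately and then sums; the result and the comparison threshold are identical.
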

\begin{proof}
    For the batch-auction design, bidders bid as in a standard first price auction if they expect the other bidder to land in the same batch, and $0$ otherwise (or the minimal bid/fee if there is one). A standard first price auction has an expected revenue of $1/3$. Thus the expected revenue is $(1-\Delta)/3$. Under time boost, given the equilibrium derived previously, the expected payment for the faster bidder is
    
    $$
    Pr[v_1\geq u]E[\tfrac{v_1^2-u^2}{2}|v_1\geq u]=1/6 - u^2/2 + u^3/3,
    $$
    
    and that for the slower bidder is
    
    $$
    Pr[v_2\geq u]E[\tfrac{v_2^2+u^2}{2}|v_2\geq u]=1/6 + u^2/2 - 2 u^3/3.
    $$
    
    Therefore the sum of the two is
    
    $$
    (1-u^3)/3=(1-(\tfrac{c}{g}\Delta)^{3/2})/3.
    $$
  \end{proof}

For the previous comparison, note that I have normalized the size of batches to $1$, whereas the $g$ parameter, which plays a similar role as the batch size, is allowed to be larger than $1$. To get an intuition what happens with variable batch size, note that if a unit time interval is subdivided into two batches while there is still one arb per unit of time, then the likelihood that the lower valuation bidder wins the race doubles. Similarly, the welfare gap grows and the revenue decreases in the number of batches. Thus, choosing larger batches has a qualitatively very similar effect as choosing a larger maximal time boost $g$.
\section{Conclusion}
The previous analysis is stylized but has immediate implications for economic design:
While latency competition is much less severe for the batch auction than in a FCFS design, there are still advantages of low latency bidding. Towards the end of a batch, bidders with a faster connection can underbid relative to optimal bidding in a standard first price auction, since there is a substantial likelihood that slower competitors do not make it into the current batch. This has an adverse effect on efficiency and revenue. Moreover, the equilibrium analysis predicts that all profits of the bidders are competed away through latency investment by the bidders in the ex-ante stage before the actual bidding.

If the time boost design is implemented, special attention should be put on the choice of parameters. Choosing the marginal cost of bidding too high is detectable in equilibrium. In that case we can predict little use of time boost bidding (no bidding rather than producing low signals) and the parameters should be adjusted. 

\bibliographystyle{ACM-Reference-Format}
\bibliography{name}

\end{document}